\definecolor{mygreen}{RGB}{20,120,60}
\title{{\em Semi-MapReduce} Meets Congested Clique}
\author[1]{Soheil Behnezhad}
\author[1]{Mahsa Derakhshan}
\author[1]{MohammadTaghi Hajiaghayi}
\affil[1]{Department of Computer Science, University of Maryland\protect\\ \texttt{\{soheil, mahsa, hajiagha\}@cs.umd.edu}}
\date{}
\begin{document}

% General TCS notations

\newcommand{\Ot}[1]{\ensuremath{\widetilde{O}(#1)}}
\newcommand{\Tht}[1]{\ensuremath{\widetilde{\Theta}(#1)}}
\newcommand{\Omt}[1]{\ensuremath{\widetilde{\Omega}(#1)}}
\renewcommand{\O}[1]{\ensuremath{O(#1)}}
\newcommand{\Th}[1]{\ensuremath{\Theta(#1)}}
\newcommand{\Om}[1]{\ensuremath{\Omega(#1)}}

% Model notations
\newcommand{\CC}[0]{congested clique}
\newcommand{\CONGEST}[0]{\ensuremath{\mathsf{CONGEST}}}
\newcommand{\MPC}[0]{\ensuremath{\mathsf{MPC}}}
\newcommand{\semiMPC}[0]{\ensuremath{\mathsf{semi}\text{-}\mathsf{MPC}}}

% File requirements
\addauthor{sb}{mygreen}    % sb for Soheil
\addauthor{md}{red}  % md for Mahsa

\newtheorem{theorem}{Theorem}[section]
\newtheorem{lemma}[theorem]{Lemma}
\newtheorem{proposition}[theorem]{Proposition}
\newtheorem{corollary}[theorem]{Corollary}
\newtheorem{conj}[theorem]{Conjecture}
\newtheorem{definition}[theorem]{Definition}
\newtheorem{claim}[theorem]{Claim}
\newtheorem{fact}[theorem]{Fact}
\newtheorem{observation}[theorem]{Observation}
\newtheorem{remark}[theorem]{Remark}
\newtheorem{assumption}[theorem]{Assumption}

\newcommand{\etal}[0]{\textit{et al.}}

\newcommand{\presectionspace}[0]{}
\newcommand{\negspace}[0]{}

\maketitle
\begin{abstract}
Graph problems are troublesome when it comes to MapReduce. Typically, to be able to design algorithms that make use of the advantages of MapReduce, assumptions beyond what the model imposes, such as the {\em density} of the input graph, are required.

In a recent shift, a simple and robust model of MapReduce for graph problems, where the space per machine is set to be \O{|V|}, has attracted considerable attention. We term this model {\em semi-MapReduce}, or in short, \semiMPC{}, and focus on its computational power.

We show through a set of simulation methods that \semiMPC{} is, perhaps surprisingly,  equivalent to the congested clique model of distributed computing. However, \semiMPC{}, in addition to round complexity, incorporates another practically important dimension to optimize: the number of machines. Furthermore, we show that algorithms in other distributed computing models, such as \CONGEST{}, can be simulated to run in the same number of rounds of \semiMPC{} while also using an optimal number of machines. We later show the implications of these simulation methods by obtaining improved algorithms for these models using the recent algorithms that have been developed.
\end{abstract}

\section{Introduction}
%\sbcomment{Things to change: (1) change title (2) update that assadi et al fits the model (3) update to near optimal matching instead of constant approx, (4) intro has "first" but not second, etc. (5) Maybe add open problems section?}

The MapReduce framework is used persistently at large- and small-scale companies to analyze massive data sets. As such, several theoretical models have been proposed to analyze the computational power of such systems. {\em Massively parallel computations} (\MPC{}) is arguably the most popular model that captures the essence of MapReduce-like models while abstracting away their technical details (see Section~\ref{sec:model} for its formal definition).

We consider the most restrictive \MPC{} model that allows for fruitful graph algorithms. Let us denote by $n$ the number of vertices of the input graph and by $m$ the number of edges. This model, which we term {\em semi-MapReduce} or, in short \semiMPC{}, restricts the space of the machines to be \O{n}. This is similar, in spirit, to the well studied semi-streaming model (see \cite{feigenbaum2005graph}). This leaves two parameters to be determined by the algorithm: the round complexity and the number of machines. We are, generally, interested in constant or sub-logarithmic round algorithms that use \O{n} or optimally\footnote{Optimal in the sense that there is only enough total space to store the input graph.} \O{m/n} machines. To date, even for simple graph problems such as graph connectivity, no sublogarithmic round algorithm is known that uses sublinear space and a sublinear number of machines. This has even been the base for some impossibility results (see e.g., Conjecture 3.1 of \cite{yaroslavtsev2017massively}). Thus, the \O{n} space in \semiMPC{} is in some sense the best possible.

In this paper, we focus on the computational power of \semiMPC{} and compare it with other well-studied models. We give simulation methods that show \semiMPC{} is  equivalent to the congested clique model of distributed computing in terms of the number of rounds that it takes to solve a problem. These methods lead to important results for both of these models and uncover a perhaps surprising relevance. For instance, one can obtain an $O(1)$ round \semiMPC{} algorithm for minimum spanning tree (or other related problems such as graph connectivity) by simulating the very recent congested clique algorithm of Jurdzi\'{n}ski and Nowicki \cite{jurdzinski2018mst}. On the other hand, the recently developed algorithms that fit into \semiMPC{}, e.g., the algorithms of Czumaj \etal{}~\cite{czumaj2017round} and Assadi \etal{}~\cite{assadi2017coresets}, can be simulated on the congested clique model to obtain the first sub-logarithmic round algorithms for approximate maximum matching and approximate vertex cover problems. We further show how to simulate algorithms of the \CONGEST{} model of distributed computing in the same number of rounds of \semiMPC{} using an optimal number of machines.

\subsection{Related Work}
There have been many papers in the literature that consider the computational power of MapReduce and other distributed models \cite{roughgarden2016shuffles, drucker2014power, karloff2010model, hegeman2015lessons}. Most relevant to the present work is the paper of Hegeman et al.~\cite{hegeman2015lessons} that gives a simulation method for running congested clique algorithms on MapReduce for dense graphs (and not the other way around). Our focus, in this paper, is specifically on the \semiMPC{} model, and show it is equivalent (i.e., a two-way simulation method) to the congested clique model and other distributed computing models. 

We note that ``proximity'' of congested clique and \semiMPC{} has also been independently mentioned in the very recent paper of Ghaffari \etal{}~\cite{ghaffari2018improved}. Our simulation methods here, independent from the problem at hand, formalize the {\em equivalence} of these two models.

\section{The \semiMPC{} Model}\label{sec:model}
We start with the formal definition of the {\em massively parallel communications} model which was first introduced in \cite{beame2013communication}.

%\negspace{}
%\paragraph{The \MPC{} model.} \sbcomment{An input (not necessarily a graph) of size $N$ is initially distributed among $M$ machines. Computation proceeds in synchronous {\em rounds}: In each round, each machine performs a local computation on its data, and at the end of the round, communicates with other machines. The communication size of each machine, i.e., the size of the total messages that a machine can send or receive at each round, is at most $C$. The {\em total communication} size of all machines (i.e., $C \cdot M$) at each step should be bounded by $\O{N^{1+\delta}}$ for a small parameter $\delta \in [0, 1)$ of the algorithm that controls \emph{replication} (ideally $\delta=0$).}

\negspace{}
\paragraph{The \MPC{} model.} An input (not necessarily a graph) of size $\ell$ is initially distributed among $p$ machines (processors) each with a space of size $s$. The total space of all machines (i.e., $p \cdot s$) is bounded by $\Ot{\ell ^ {1 + \delta}}$ for a small parameter $\delta \in [0, 1)$ of the algorithm that controls \emph{replication}. Furthermore, the number of machines should not be more than the space per machine\footnote{This is motivated by a practical restriction that each machine should be able to store the identifier (e.g., IP) of all other machines to be able to communicate with them.} (i.e., $p \leq s$). Computation proceeds in synchronous \emph{rounds}: In each round, each machine performs a local computation on its data, and at the end of the round, communicates with other machines. The total size of the messages that a machine sends or receives in each round should not be more than its space $s$.

Round complexity is practically the dominant cost of computation in MapReduce systems (see e.g., \cite{beame2013communication, lattanzi2011filtering}). Thus, the primary goal for \MPC{} algorithms is to minimize the number of rounds while keeping the number of machines and their local space \emph{significantly} smaller than the input size. That is, in most cases, we desire algorithms that use $\O{\ell^{1-\epsilon_1}}$ space and $\O{\ell^{1-\epsilon_2}}$ machines for constants $\epsilon_1, \epsilon_2 > 0$.

We are now ready to formalize \semiMPC{} for graph problems.

\negspace{}
\paragraph{The \semiMPC{} model.} Let $G = (V, E)$ with $n = |V|$ and $m = |E|$, be the input graph. An algorithm is in \semiMPC{} if it is in \MPC{} and each machine uses a space of size $s = \O{n}$.

Note that by $\O{n}$ space, we mean $\O{n}$ words of size $\O{\log n}$ bits. That is, the machines are able to store the index of all the vertices. Since \semiMPC{} fixes $s$, only two parameters are left to be determined by the algorithm: The round complexity and the number of machines. Generally we desire algorithms that run in constant or sublogarithmic rounds.

% and the number of machines should be in range $\lbrack \Omt{m/n}, \Ot{n} \rbrack$. The lower bound of \Omt{m/n} is required for the algorithm to be able to store all the edges across the system and the upper bound of $\Ot{n}$ comes from the $p \leq s$ assumption.

%To be able to store the whole graph, an algorithm needs to satisfy $s \cdot p = \Omega(m)$. Therefore $\title{\Omega}(m/n)$ is a natural lower bound on the number of machines used by any algorithm in \semiMPC{}. Furthermore, an upper bound of $\Ot{n}$ on the number of machines comes from the $p \leq s$ assumption of the \MPC{} model.
\negspace{}
\paragraph{Advantages of \semiMPC{}.} A main determining factor in the round complexity of graph algorithms in \MPC{} is whether it is possible to store the vertices of the input graph in one machine. By fixing the space to be \O{n}, the \semiMPC{} model allows storage of all vertices in one machine but prevents the algorithm to use a significantly more space. Compared to traditional \MPC{} algorithms that impose the space restriction based on the input size (i.e., the number of edges of the input graph), \semiMPC{} is more restrictive for {\em dense} graphs and less restrictive for {\em sparse} graphs. More precisely, for  dense graphs, where $m = \Omega(n^{1+c})$ for some constant $c > 0$, \semiMPC{} imposes a strict upper bound of \O{n} on the space per machine whereas anything sublinear in $m$ (e.g., $\O{n^{1+c/2}}$) would be acceptable for an \MPC{} algorithm. However, for sparse graphs, where $m$ is almost linear in $n$, the space per machine in \semiMPC{} is no longer sublinear in the input size and it becomes easier to design \semiMPC{} algorithms rather than \MPC{} algorithms.
%To phrase it in another way, for sparse graphs, the space per machine in \MPC{} is not large enough to store all the vertices in each machine, whereas, \semiMPC{} guarantees sufficient space for storing them.

As it was already mentioned, the insufficiency of space to store all the vertices in each machine for \MPC{} is one of the main reasons that make solving graph problems particularly hard on it. This is captured by the widely accepted conjecture that graph connectivity (or even the simpler problem of deciding if the input graph is one cycle or two cycles) takes $\Om{\log n}$ rounds of \MPC{} if $s = n^{1-\epsilon}$ for any constant $\epsilon > 0$. The matching upper bound of \O{\log n} for many graph problems could generally be achieved by simply simulating PRAM algorithms \cite{karloff2010model, goodrich2011sorting}. This indicates that the local free computation of \MPC{} does not help in reducing the round complexity of graph algorithms. On the contrary, many graph algorithms could be solved in sublogarithmic rounds of \semiMPC{}. For example, our simulation methods in Section~\ref{sec:simul} prove that graph connectivity and minimum spanning tree problems can be solved in $O(1)$ rounds of \semiMPC{}.

\negspace{}
\section{Comparison \& Simulations}\label{sec:simul}
We start with the {\em congested clique} \cite{peleg2000distributed} model. In the congested clique model of distributed computing, each node of the graph is initially aware of its incident edges. Then in synchronous communication rounds, every pair of nodes, even if not connected in the input graph, can exchange a message of size $\O{\log n}$ bits.

Since congested clique limits the message size between every pair of the nodes to be at most \O{\log n} bits, it becomes quite straight-forward to simulate it in the seemingly more powerful \semiMPC{} model which allows the machines to communicate up to \O{n} messages of length $\O{\log n}$ bits. This is also true for \MPC{} as long as the machines have access to a memory of size \Om{n}, i.e., for dense graphs (as it was mentioned in \cite{hegeman2015lessons}).

\begin{theorem} \label{thm:cc-mpc}
	Let $\mathcal{A}_{CC}$ be an algorithm that for an input graph $G=(V, E)$, with $n$ vertices, runs in $T$ rounds of \CC{} using a local memory of $\O{n}$ in each node. One can simulate $\mathcal{A}_{CC}$ in $\O{T}$ rounds of \semiMPC{} using $n$ machines.
\end{theorem}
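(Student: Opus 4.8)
The plan is to put the $n$ nodes of the \CC{} instance in one‑to‑one correspondence with the $n$ machines, so that machine $i$ maintains exactly the local state of node $i$. Since the hypothesis gives each node only \O{n} memory, that state fits inside a machine's \O{n} space, and since the vertices carry the labels $\{1,\dots,n\}$, a message that node $u$ would send to node $v$ becomes a message from machine $u$ to machine $v$; recall that in \MPC{} every machine knows the identifiers of all the others, so this addressing is available for free. The only work needed before the simulation can start is to bring the input into \CC{} format: in \semiMPC{} the $m$ edges begin spread over the machines (at most \O{n} per machine, since $m \le \binom{n}{2}$), whereas \CC{} wants node $v$ to know its incident edges. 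First I would route each stored edge $\{u,v\}$ to machine $u$ and to machine $v$. A machine then sends at most \O{n} words (two per edge it holds) and machine $v$ receives exactly $\deg(v) \le n-1 = \O{n}$ words, so by the standard \MPC{} routing primitive this redistribution takes \O{1} rounds and never exceeds the per‑machine communication bound $s = \O{n}$.

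With the correspondence in place, one round of $\mathcal{A}_{CC}$ is simulated in \O{1} rounds of \semiMPC{}. Each machine $i$ first carries out, as free local computation, whatever computation node $i$ would perform — legitimate since it holds node $i$'s entire state (together with its own local randomness, should $\mathcal{A}_{CC}$ be randomized). Then, for the communication step, node $i$ produces one \O{\log n}‑bit message for each of the other $n-1$ nodes, and machine $i$ sends these as at most $n-1$ words to the corresponding machines; symmetrically it receives at most $n-1$ words. Thus every machine both sends and receives \O{n\log n} bits $= \O{n}$ words, within its budget, so this too is a single \O{1}‑round routing step. Iterating over the $T$ rounds of $\mathcal{A}_{CC}$ and finally having machine $i$ report node $i$'s output yields a \semiMPC{} algorithm that uses $n$ machines, \O{n} space per machine, and \O{T} rounds, as claimed.

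What remains is bookkeeping rather than a genuine obstacle: one verifies (i) the \O{n} space bound — a machine stores one node's state plus its \O{n} incident edges; (ii) the \MPC{} requirement $p \le s$ — here $p = n$ and $s = \O{n}$; and (iii) that no machine overflows its communication budget in any round, which we reduced above to the degree bound $\deg(v) \le n-1$ for the initial load and to the \O{n\log n}‑bit message bound intrinsic to \CC{} for every subsequent round. The one point worth a sentence is that the total space used is $n\cdot\O{n} = \O{n^2}$, which lies within the \MPC{} replication budget precisely in the denser regime in which the \CC{} comparison is most meaningful; economizing the number of machines for sparser inputs is the subject of our other simulation results. The genuinely hard direction — simulating an arbitrary \semiMPC{} algorithm on \CC{} — is a separate statement; the reduction here is, by design, the easy one.
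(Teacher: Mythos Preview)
Your proof is correct and follows essentially the same approach as the paper: a one-to-one mapping of nodes to machines, an \O{1}-round redistribution of edges to their endpoints, and then a round-by-round simulation whose send/receive volume is bounded by the \O{n} per-node message count of the congested clique. If anything, your write-up is more careful than the paper's, explicitly checking $p\le s$, the replication budget, and the handling of randomness.
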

\begin{proof}
	The first difference between congested clique and the \semiMPC{} model is in the way that the input is initially distributed to the nodes/machines. In the congested clique model, each node initially knows its edges. On the other hand, in the \semiMPC{} model, the edges are initially distributed adversarially into the machines.
	
	We assume that we have access to $n$ machines of \semiMPC{}. Therefore we simulate each of the nodes of congested clique on one machines. That is, in the first round, each machine, for any edge $(u, v)$ that is stored in it, sends a message to the corresponding machine of $u$ and a message to the corresponding machine of $v$ informing them about the edge. It is easy to confirm that the number of messages that each machine sends and receives is $\O{n}$ since each node has degree at most $n$ and that the number of edges that are initially stored in the machines is of $\O{n}$.
	
	Next, in each round, the \semiMPC{} machines exactly simulate the internal computation of their corresponding node in the congested clique algorithm and send the same message that their corresponding node would send. There are two things that we would have to argue would not be violated: (1) The number of messages that the machines send and receive; (2) The internal space. The first would not be more than $\O{n}$ since each node of the congested clique can send and receive at most $\O{n}$ messages. The latter also would not be violated because we explicitly assumed in the theorem statement that the congested clique nodes use an internal space of size $\O{n}$.
	
	Overall it takes only $T + 1$ rounds to simulated any $T$ round congested clique algorithm that satisfies the mentioned properties on \semiMPC{}.
\end{proof}

Note that it is essential to explicitly mention that the congested clique algorithm should use a space of size $\O{n}$ on each machine. In theory, an algorithm on congested clique might require a significantly more space on one node than the total data that is sent to it from other nodes and this cannot be simulated on \semiMPC{}; this, however, rarely happens in natural algorithms. It is worth mentioning that it is also common to only restrict the communication size of the machines in the \MPC{} model instead of the amount of space. This is, for instance, how the original paper of Beame~\cite{beame2013communication} defines \MPC{}. With this definition, we can get rid of the extra condition that the congested clique algorithm has to use \O{n} local space.

While it is not possible to provide a better guarantee on the number of machines required in Theorem~\ref{thm:cc-mpc} in the general case, we remark that many natural algorithms on the congested clique require much fewer number of machines when simulated on the \semiMPC{} model. The reason is that typically, techniques such as coresets, sampling, linear sketches, etc., perform the main computation on very few -- and usually just a single -- node and use the other nodes to only store the data. In such scenarios, we require only $O(m/n)$ (i.e., optimal) number of \semiMPC{} machines to simulate the algorithm.

The more challenging direction is to simulate any \semiMPC{} algorithm in the congested clique model for which we use the routing scheme of Lenzen \cite{lenzen2013optimal}.

\begin{theorem}\label{thm:mpc-cc}
	Let $\mathcal{A}_M$ be an algorithm that for an input graph $G$, with $n$ vertices, runs in $T$ rounds of \semiMPC{}. One can simulate $\mathcal{A}_M$ in $\O{T}$ rounds of \CC{}.
\end{theorem}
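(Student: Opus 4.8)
The plan is to combine a direct machine-to-node assignment with Lenzen's constant-round routing scheme \cite{lenzen2013optimal}. Recall that in \semiMPC{} the number of machines satisfies $p \leq s = \O{n}$, so up to constant factors there are at most $n$ machines; if $p$ exceeds $n$ by a constant factor we let each \CC{} node host $\O{1}$ machines, which keeps both the local state ($\O{1}\cdot\O{n} = \O{n}$) and the per-node communication ($\O{1}\cdot\O{n} = \O{n}$) within the \CC{} bandwidth. Hence, without loss of generality we identify the machines of $\mathcal{A}_M$ with (a subset of) the nodes of the congested clique. The workhorse is the following fact: in one round of \CC{}, any set of point-to-point messages of $\O{\log n}$ bits each can be delivered provided that every node is the source of at most $\O{n}$ of them and the destination of at most $\O{n}$ of them \cite{lenzen2013optimal}.

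First I would handle the initialization, i.e., getting the edges of $G$ to the machines as $\mathcal{A}_M$ expects them. Since $\mathcal{A}_M$ must be correct against an adversarial input placement, we are free to pick \emph{any} placement in which each machine holds $\O{n}$ edges; such a placement exists because the total space $p\cdot s$ must accommodate the input, so $m = \O{pn}$. When $p = n$ the simplest choice needs no communication at all: machine $v$ simply keeps the edges $(u,v)$ with $u > v$, all of which node $v$ already knows; this is $\leq n-1 = \O{n}$ edges per machine and each edge is kept exactly once. In general, the $n$ nodes compute a balanced assignment of the $m$ edges to the $p$ machines --- e.g., order the edges lexicographically, use one \CC{} round to gather the (larger-endpoint) degrees and form their prefix sums so that every node learns the ranks of the edges it owns, and send the edge of rank $r$ to machine $\lfloor r/\lceil m/p\rceil\rfloor$. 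Every node is then the source of $\leq n-1 = \O{n}$ edge-messages and every machine the destination of $\O{m/p} = \O{n}$ of them, so Lenzen's scheme delivers them in $\O{1}$ rounds.

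Next I would simulate the $T$ rounds one by one. At the start of each simulated round, \CC{} node $i$ holds exactly the state that machine $i$ has in $\mathcal{A}_M$; it performs the same (arbitrary, free) local computation and produces the same outgoing messages. A \semiMPC{} message is then broken into $\O{\log n}$-bit packets, each addressed to the target machine. The crucial bandwidth check is immediate from the model: in any round of \semiMPC{} the total size of the messages a machine sends, or receives, is at most $s = \O{n}$ words, so each \CC{} node is the source of $\O{n}$ and the destination of $\O{n}$ packets --- exactly Lenzen's precondition. Thus one round of $\mathcal{A}_M$ is simulated in $\O{1}$ rounds of \CC{}, and over all $T$ rounds (plus the $\O{1}$-round initialization and, if the output convention differs, one final routing round to deliver each vertex its answer) we obtain an $\O{T}$-round \CC{} algorithm.

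The steps are individually routine; the part that needs the most care is reconciling the \semiMPC{} parameter $p$ with the fixed set of $n$ \CC{} nodes --- in particular showing the input can be (re)distributed cheaply and faithfully as a valid adversarial placement --- and verifying that a \semiMPC{} message, which may carry up to $\O{n}$ words to a single machine, still decomposes into a packet pattern meeting Lenzen's per-node $\O{n}$ bound. Both points reduce to the same accounting: the send/receive volume of a \semiMPC{} machine in one round is capped by its space $s=\O{n}$, which is precisely what Lenzen's routing can absorb per node per constant number of rounds.
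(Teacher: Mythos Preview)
Your proposal is correct and follows essentially the same approach as the paper: identify \semiMPC{} machines with congested-clique nodes and invoke Lenzen's routing \cite{lenzen2013optimal} to deliver the $\O{n}$ words each machine sends and receives per round in $\O{1}$ \CC{} rounds. You are simply more careful than the paper about two points the authors leave implicit --- reconciling $p$ with $n$ (handled by your $\O{1}$-machines-per-node packing) and producing a valid initial edge placement from the \CC{} input --- but the core argument is identical.
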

\begin{proof}
	To prove this, we simulate each machine in \semiMPC{} by a \CC{} node. In each round of \semiMPC{} each machine sends and receives at most $\O{n}$ messages of size $\O{\log n}$ bits. Each message can be represented by its sender, receiver and its value. 
	
	The challenge is that in the congested clique, because of the bandwidth limit, we cannot directly send $\O{n}$ messages from one node to another. However, there is a simple workaround for this. Consider a simpler example where one node wants to send $\O{n}$ messages to another one. We can first distribute the message to all other nodes in one round (i.e., by sending $\O{n}$ messages of size $\O{\log n}$ bits to the other nodes), then each node in the next round, sends this message to the destination node. This means that without violating the bandwidth, one machines can send a message of size $\O{n}$ to another in 2 rounds. However, note that we require a more complicated routine for simulating \semiMPC{} on the congested clique. Here, {\em each} machine can send and receive up to $\O{n}$ messages at the same time. Fortunately, there is a routing scheme by Lenzen~\cite{lenzen2013optimal} (Theorem 3.7) that can be used for this. More precisely, \cite{lenzen2013optimal} shows that if each node is the destination and source of $\O{n}$ messages, there exists a deterministic $\O{1}$ round algorithm that delivers them without violating the bandwidth limit. Therefore, we can simulate any algorithm that takes $T$ rounds of \semiMPC{} in $\O{T}$ rounds of \CC{} deterministically.     
\end{proof}

While these simulation methods show that the round complexity can essentially be preserved when simulating \semiMPC{} and congested cliques algorithms on each other, the number of machines that \semiMPC{} needs in Theorem~\ref{thm:cc-mpc} is $n$, when the optimal number of machines in \semiMPC{} would be $m/n$ when $m \ll n^2$. We show, that by simulating \CONGEST{} \cite{peleg2000distributed} algorithms, the same round complexity can be achieved while using much fewer number of machines. The \CONGEST{} model is similar to the congested clique, however two nodes can communicate with each other, only if there is an edge between them in the actual graph. This limits the communication of each node and we use it to simulate multiple nodes in one machine.

\begin{theorem}\label{thm:congest-mpc}
	Let $\mathcal{A}_C$ be an algorithm that for an input graph with $n$ vertices and $m$ edges runs in $T$ rounds of \CONGEST{} while using, on each node, a local memory that is linear in the total messages that it receives. One can simulate $\mathcal{A}_C$ in $T$ rounds of \semiMPC{} using at most $\O{T \cdot m/n}$ machines.
\end{theorem}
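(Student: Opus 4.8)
The plan is to put each node of the \CONGEST{} network on a \semiMPC{} machine, bundling several nodes per machine, and to let round $r$ of the simulation carry out round $r$ of $\mathcal{A}_C$. First I would isolate the two resources a node $v$ of degree $d_v := \deg_G(v)$ consumes: in each round it sends and receives at most $d_v$ messages of $\O{\log n}$ bits; and, by the hypothesis on $\mathcal{A}_C$, after all $T$ rounds its local state occupies $\O{T d_v}$ words, since that bounds the total number of messages it has ever received. Summing, the whole execution of $\mathcal{A}_C$ ever holds $\sum_v \O{T d_v} = \O{Tm}$ words of state.

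Second, I would fix the node-to-machine assignment by a greedy (first-fit) packing that gives each machine a set $S$ of nodes with $\sum_{v\in S} T d_v = \Th{n}$. This uses $\O{Tm/n}$ machines, which is at most $\O{n}$, so the $p \le s$ requirement of \semiMPC{} holds whenever $Tm = \O{n^2}$. Such a machine has room for the incident edges and the full (time-evolving) state of each node it owns, and since $\sum_{v\in S} d_v = \Th{n/T} = \O{n}$ it sends and receives only $\O{n}$ messages in any round. Given the assignment, one simulated round is immediate: each machine runs, locally and for free, the round-$r$ computation of each of its nodes, and then, for every message a node $u$ it owns would send to a neighbour $v$, forwards that $\O{\log n}$-bit word to the machine owning $v$. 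After a preliminary round that routes the adversarially placed input edges to their owner machines, this reproduces the $T$-round run of $\mathcal{A}_C$ in $T$ rounds of \semiMPC{}.

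The step I expect to be the main obstacle is a high-degree node: if $d_v = \Om{n/T}$ (possibly $d_v = n-1$), its cumulative state $\Th{T d_v}$ can exceed one machine's $\O{n}$ budget, so it cannot be packed as above and, worse, its round-$r$ local computation --- a single atomic function in the \CONGEST{} model --- cannot obviously be performed once its state is spread over several machines. I would dedicate to each such $v$ a block of $\Th{T d_v / n}$ machines that stripe $v$'s incident edges and the corresponding part of its state, with one coordinator machine; the block still delivers $v$'s $\le d_v \le n$ per-round messages within budget, and one checks that $\sum_v \O{T d_v / n} = \O{Tm/n}$, so the machine count is unchanged. Making this work exactly within $T$ rounds really needs the per-node state to fit a single machine, which is the regime the hypothesis is aimed at; in the typical case where $\mathcal{A}_C$ keeps only $\O{1}$ rounds of history (or otherwise uses $\O{n}$ memory per node) the block construction is not needed at all. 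The remaining points --- correctness of the first-fit bound and the routine $\O{\log n}$-bit encoding of (sender, receiver, value) triples --- I would dispatch briefly.
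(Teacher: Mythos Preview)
Your approach is the paper's: pack \CONGEST{} nodes into \semiMPC{} machines so that each machine carries total degree $O(n/T)$, then simulate round by round. Two small differences are worth noting. First, the paper spends its effort on something you elide: computing the assignment \emph{within the model}. With edges adversarially distributed, it uses a couple of rounds to aggregate partial degree counts onto designated machines, then one machine sorts the vertices by degree and assigns the $i$-th vertex to machine $i \bmod M$, after which the assignment is broadcast to everyone; only then can edges be routed to their owners. Your ``preliminary round that routes the adversarially placed input edges to their owner machines'' presupposes the assignment is already known everywhere, so this preprocessing is not a point to ``dispatch briefly.'' Second, the paper does not address your high-degree obstacle at all: its claim that the round-robin assignment gives every machine total degree $O(m/M)=O(n/T)$ tacitly needs $\max_v d_v = O(n/T)$, since a single vertex of larger degree already overflows its bucket. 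So your block construction, and your caveat about it, are attempts to go beyond what the paper actually argues.
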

\begin{proof}
	Note that the whole challenge in the proof of this theorem is for the case where $T \cdot m/n \ll n$, i.e., we do not have enough machines to simulate each node on one machine. Therefore, in the simulation, each \semiMPC{} machine needs to be in charge of more than one \CONGEST{} node. We give an algorithm for assigning \CONGEST{} nodes to \semiMPC{} machine such that it is possible for any machine to simulate the nodes assigned to it. Also, this assignment is known to all the machines.  Let $V_a$ denote the set of nodes assigned to machine $a$. Also, let $d_v$ denote the degree of node  $v$ in graph $G$. If $\Sigma_{v\in V_a} d_v$ is $\O{n/T}$ for any machine $a$,  it is possible for $a$ to simulate nodes in $V_a$. The reason is that in each round, communication of nodes in $V_a$ with other nodes is at most $\Sigma_{v\in V_a} d_v$ which does not violate the communication limit of \semiMPC{}. Also, during the algorithm the nodes receive at most  $T \cdot \Sigma_{v\in V_a} d_v$ which is $\O{n}$ and does not violate the memory limit of machines.
	
	To give the node assignment one first need to find the degree of vertices in $V$. For simplicity we use $m/n$ number of machines and we assume that edges are evenly distributed over these machines. (If it is not the case, we can simply achieve this in one round.) Let $d_{v, a}$ denote the degree of vertex $v$ in machine $a$ which is the number of edges in $a$ that one of their end point is $v$. Assume that vertices are labeled from $1$ to $n$, and let $l_v$ denote the label of vertex $v$. Also, machines are labeled from $1$ to $ m/n $ and $l_a$ denotes the label of machine $a$. In the first round for any vertex $v$, after computing $d_{v, a}$ in any machine $a$, send them to the machine  with label $\lfloor l_v/(m/n) \rfloor$. In this round the communication of each machine is $\O{n}$, since it receives at most $m/n$ messages for $\O{n/(m/n)}$ vertices. After this round of communication we are able to find the degree of all the vertices by summing up their partial degrees in one machine.
	
	Knowing the degree of vertices helps us to give the node assignment. Let $M= \O{T \cdot m/n}$ denote the number of machines. Also we assume that machines are labeled from $1$ to $M$. We sort vertices by their degrees in one machine and partition them to $M$ sets where the overall degree of vertices in each set is $\O{n/T}$. To achieve this for any $i\in n$ we assign the $i$-th node in the sorted list to the machine with label $i- M\cdot \lfloor i/M \rfloor$. Since the overall degree of vertices is $\O{m}$, the degree of vertices in each  machine is  $\O{m/M}$ which is  $\O{n/T}$. We also inform all the machines about this assignment. After this round each machine simulates the nodes assigned to it. Therefore, one can simulate $\mathcal{A}_C$ in at most $T+3$ rounds of \semiMPC{} using at most $\O{T \cdot m/n}$ machines.
\end{proof}

\section{Implications}
The simulation methods mentioned above, while not hard to prove, have significant consequences. For instance, by combining the recent algorithm of Jurdzi\'{n}ski and Nowicki \cite{jurdzinski2018mst} and Theorem~\ref{thm:cc-mpc}, we obtain the following which improves prior connectivity and MST algorithms on \semiMPC{} and \MPC{}. \cite{lattanzi2011filtering, bateni2017affinity}

\begin{corollary}
	There exists an $\O{1}$ round algorithm for computing the minimum spanning tree and graph connectivity on \semiMPC{}.
\end{corollary}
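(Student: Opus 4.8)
The plan is to invoke Theorem~\ref{thm:cc-mpc} with the recent congested clique algorithm of Jurdzi\'{n}ski and Nowicki~\cite{jurdzinski2018mst} as $\mathcal{A}_{CC}$. That algorithm computes an MST of the input graph in $\O{1}$ rounds of \CC{}. First I would recall its guarantees and, crucially, check the hypothesis of Theorem~\ref{thm:cc-mpc}: that each node uses only $\O{n}$ local memory. This should follow essentially for free, since any congested clique algorithm running in $\O{1}$ rounds has each node send and receive only $\O{n}$ words total over the whole execution, so the only thing to verify is that the internal bookkeeping of the Jurdzi\'{n}ski--Nowicki routine does not blow up the space beyond a linear factor of what is communicated --- which, as remarked after Theorem~\ref{thm:cc-mpc}, holds for this (and essentially all natural) algorithms.

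Given this, the second step is a direct application of Theorem~\ref{thm:cc-mpc}: the $\O{1}$-round \CC{} MST algorithm simulates in $\O{1}$ rounds of \semiMPC{} using $n$ machines. Since each machine has $\O{n}$ space and $n$ machines suffice to store the $\O{m} = \O{n^2}$ edges, this is within the \semiMPC{} regime (indeed it fits the \MPC{} replication bound as well).

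Finally, for graph connectivity I would observe that it reduces to MST at no additional cost: once an MST (or, on disconnected inputs, a spanning forest --- the same algorithm produces one) has been computed, the connected components are exactly the trees of the forest, and a component identifier can be propagated to all vertices in $\O{1}$ further rounds of \semiMPC{} (e.g., by standard pointer-jumping, or simply by collecting the $\O{n}$-edge forest onto a single machine, which has $\O{n}$ space). Hence both MST and connectivity admit $\O{1}$-round \semiMPC{} algorithms.

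The only genuinely delicate point is the local-memory check in the first step; everything else is bookkeeping. If one prefers to sidestep it entirely, one can instead appeal to the variant of the \MPC{}/\CC{} correspondence that restricts only communication (not space) per machine, noted in the discussion after Theorem~\ref{thm:cc-mpc}, in which case no assumption on the node memory of $\mathcal{A}_{CC}$ is needed.
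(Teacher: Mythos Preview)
Your proposal is correct and follows exactly the paper's approach: the corollary is stated without proof, as an immediate consequence of combining Theorem~\ref{thm:cc-mpc} with the $\O{1}$-round congested clique MST algorithm of Jurdzi\'{n}ski and Nowicki~\cite{jurdzinski2018mst}. Your added care about verifying the $\O{n}$ local-memory hypothesis and spelling out the connectivity-from-MST reduction only makes the argument more complete than what the paper provides.
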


Furthermore, by combining the recent algorithms of Czumaj \etal{} \cite{czumaj2017round} and Assadi \etal{}~\cite{assadi2017coresets} with Theorem~\ref{thm:mpc-cc}, we get the first sub-logarithmic round algorithms for approximate maximum matching and vertex cover.

\begin{corollary}
	There exists an $\O{\log \log n}$ round algorithm for computing a $(1+\epsilon)$-approximate maximum matching on the congested clique.
\end{corollary}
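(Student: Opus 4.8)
The plan is to obtain the corollary almost for free, by feeding a known low-round \semiMPC{} matching algorithm into the simulation of Theorem~\ref{thm:mpc-cc}. The input is the algorithm of Assadi \etal{}~\cite{assadi2017coresets} (which itself builds on the round-compression technique of Czumaj \etal{}~\cite{czumaj2017round}): for any constant $\epsilon>0$ and any $n$-vertex graph, it computes a $(1+\epsilon)$-approximate maximum matching in $T=\O{\log\log n}$ rounds of \MPC{} with $\Ot{n}$ space per machine. Since this space is $\O{n}$ up to polylogarithmic factors, the algorithm lives in (essentially) the \semiMPC{} model of Section~\ref{sec:model}, so it is exactly the kind of algorithm Theorem~\ref{thm:mpc-cc} is designed to transport.

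First I would check that this algorithm meets the hypothesis of Theorem~\ref{thm:mpc-cc}, i.e.\ that it is a genuine \semiMPC{} algorithm: each machine stores $\O{n}$ words and in each round sends and receives $\O{n}$ words, the latter being immediate from the \MPC{} space bound. I would also note that it uses $\O{m/n}$ machines, which is never more than $\O{n}$, so when Theorem~\ref{thm:mpc-cc} replaces each \semiMPC{} machine by a \CC{} node there are enough nodes, and that the initial data format of \CC{} (each node knowing its incident edges) causes no issue, since the simulation of Theorem~\ref{thm:mpc-cc} begins by redistributing edges arbitrarily among the simulating nodes anyway. Then I would invoke Theorem~\ref{thm:mpc-cc} verbatim: it converts each \semiMPC{} round into $\O{1}$ rounds of \CC{} via Lenzen's routing scheme~\cite{lenzen2013optimal}, so the $T=\O{\log\log n}$ rounds become $\O{\log\log n}$ rounds of \CC{}, which is exactly the claimed bound. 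The identical pipeline applied to the vertex-cover routine of~\cite{assadi2017coresets} gives an $\O{\log\log n}$-round \CC{} algorithm for $(2+\epsilon)$-approximate vertex cover as well.

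The step I expect to require the most care is reconciling the (up to polylogarithmic factors) $\O{n}$ per-machine memory of the coreset-based algorithm with the strict $\O{n}$ budget that Theorem~\ref{thm:mpc-cc} charges Lenzen's routing for: if a node had to route $\Ot{n}$ messages per simulated \semiMPC{} round, Lenzen's scheme would take $\Ot{1}$ rather than $\O{1}$ \CC{} rounds, which would degrade the final bound to $\poly\log n$. The way around this is to account not for the machines' stored state but for their \emph{per-round communication}: in these coreset / round-compression algorithms a machine exchanges only a bounded number of coresets per round, and for constant $\epsilon$ each such coreset (e.g.\ an edge-degree constrained subgraph) has $\O{n}$ edges, so the per-round communication can be kept at $\O{n}$; with this, each \semiMPC{} round simulates in $\O{1}$ \CC{} rounds and the $\O{\log\log n}$ bound follows. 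If one prefers not to rely on this accounting, the fallback is to extend Theorem~\ref{thm:mpc-cc} to $\Ot{n}$-space algorithms with a $\poly\log n$ overhead per round, which still yields a (slightly weaker) sub-logarithmic \CC{} bound — but I expect the $\O{n}$-communication argument to be the right one and to give the statement as written.
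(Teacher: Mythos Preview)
Your approach is exactly the paper's: the corollary is stated without its own proof and is justified in the preceding sentence by ``combining the recent algorithms of Czumaj \etal{}~\cite{czumaj2017round} and Assadi \etal{}~\cite{assadi2017coresets} with Theorem~\ref{thm:mpc-cc}.'' Your write-up is in fact more careful than the paper's, which does not explicitly address the $\Ot{n}$-vs-$\O{n}$ per-machine budget; your observation that the per-round \emph{communication} (as opposed to stored state) of the coreset-based algorithms is $\O{n}$ is the right way to close that gap, and matches how the simulation in Theorem~\ref{thm:mpc-cc} is actually invoked.
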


\begin{corollary}\label{cor:vc}
	There exists an $\O{\log \log n}$ round algorithm for computing an $O(1)$-approximate vertex cover on the congested clique.
\end{corollary}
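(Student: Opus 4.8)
The plan is to obtain this corollary in exactly the way its matching-analogue is obtained: start from a known \semiMPC{} algorithm for the problem and push it through the simulation of Theorem~\ref{thm:mpc-cc}. Concretely, Assadi \etal{}~\cite{assadi2017coresets} give a randomized \MPC{} algorithm that computes an $O(1)$-approximate vertex cover in $\O{\log\log n}$ rounds using only $\Ot{n}$ space per machine. Since its per-machine space is $\Ot{n}$, this is a \semiMPC{} algorithm (after absorbing the $\poly\log n$ factors into the $\Ot{\cdot}$, or simply noting that the model as defined in Section~\ref{sec:model} already tolerates the extra polylogarithmic slack), so Theorem~\ref{thm:mpc-cc} applies verbatim and yields an algorithm that runs in $\O{\log\log n}$ rounds of \CC{}.

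First I would recall the structure of the Assadi \etal{} algorithm and verify that it genuinely respects the \semiMPC{} constraints. Their algorithm is coreset-based: the edge set is randomly partitioned across the machines, each machine locally computes an edge-degree-constrained subgraph (EDCS) of its share, the union of these subgraphs --- which still has only $\Ot{n}$ edges --- is re-partitioned, and the process is iterated $\O{\log\log n}$ times until a single $\Ot{n}$-edge subgraph remains on one machine, whose (appropriately weighted) vertex cover is an $O(1)$-approximation of a vertex cover of $G$. The two things to check are that in every round each machine sends, receives, and stores only $\Ot{n}$ words (which holds because each intermediate subgraph has $\Ot{n}$ edges and each machine is responsible for $O(1)$ of the pieces) and that the number of machines never exceeds the per-machine space. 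Both are immediate from the description in \cite{assadi2017coresets}.

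With that in hand, the remaining step is a direct invocation of Theorem~\ref{thm:mpc-cc} with $T = \O{\log\log n}$, together with the observation that the output --- which vertex is in the cover --- can be delivered to the corresponding \CC{} node as part of the final routing step. One subtlety worth stating explicitly is randomization: the Assadi \etal{} algorithm is randomized and succeeds with high probability, whereas Lenzen's routing scheme underlying Theorem~\ref{thm:mpc-cc} is deterministic and introduces no error, so the \CC{} algorithm we obtain has the same success probability and the same asymptotic round bound as the original \semiMPC{} algorithm. The main obstacle is therefore not in the simulation itself but in the bookkeeping of the source algorithm: making sure that the coreset-merging steps --- which in \cite{assadi2017coresets} are sometimes phrased as a machine that ``collects the current subgraph'' --- really do fit in $\Ot{n}$ words per machine, and, if one insisted on the strict $\O{n}$ bound rather than $\Ot{n}$, that the $\poly\log$ overhead in the EDCS size could be shaved or is absorbed by the model as stated in Section~\ref{sec:model}.
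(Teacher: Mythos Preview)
Your proposal is correct and matches the paper's approach exactly: the paper obtains this corollary simply by combining the $\O{\log\log n}$-round \semiMPC{} vertex-cover algorithm of Assadi \etal{}~\cite{assadi2017coresets} with Theorem~\ref{thm:mpc-cc}, and you do precisely that, with additional (and welcome) verification that the source algorithm actually respects the per-machine space bound.
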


%We remark that the result of Corollary~\ref{cor:vc} was later improved by the algorithm of Ghaffari \etal{}~\cite{ghaffari2018improved} to obtain a $(2+\epsilon)$ approximation on the congested clique.

%\section{Maximal Matching and Minimum Vertex Cover}\label{sec:matching}
%\input{matching}

\section{Conclusion}
We formalized and compared the computational power of \semiMPC{} with other well-known models. Specifically, we showed that \semiMPC{} is almost equivalent to the congested clique model while incorporating another dimension to optimize for: the number of machines. 

\section{Acknowledgements}
We thank Sepehr Assadi for his helpful comments on an earlier version of this paper.

\bibliographystyle{alpha}
\bibliography{references}

\end{document}